\newtheorem{theorem}{Theorem}[section]
\newtheorem{corollary}[theorem]{Corollary}
\newtheorem{definition}[theorem]{Definition}
\newtheorem{remark}[theorem]{Remark}
\numberwithin{equation}{section}
\newcommand{\R}{{\mathbb{R}}}
\newcommand{\F}{{\mathbb{F}}}
\newcommand{\q}{\mathsf{q}}
\newcommand{\p}{\mathsf{p}}
\DeclareMathOperator{\Tr}{Tr}
\DeclareMathOperator{\diff}{d}
\newcommand{\sigalg}{\mathcal{F}}
\newcommand{\ul}{\underline}
\newcommand{\ol}{\overline}
\newcommand{\Let}{:=}
\newcommand{\EE}{\mathds{E}}
\newcommand{\PP}{\mathds{P}}
\begin{document}

\begin{abstract}
	Incremental stability is a property of dynamical systems ensuring the uniform asymptotic stability of each trajectory rather than a fixed equilibrium point or trajectory. Here, we introduce a notion of incremental stability for stochastic control systems and provide its description in terms of existence of a notion of so-called incremental Lyapunov functions. Moreover, we provide a backstepping controller design scheme providing controllers along with corresponding incremental Lyapunov functions rendering a class of stochastic control systems, namely, stochastic Hamiltonian systems with jumps, incrementally stable. To illustrate the effectiveness of the proposed approach, we design a controller making a spring pendulum system in a noisy environment incrementally stable.
\end{abstract}

\title[Backstepping Design for Incremental Stability of Stochastic Hamiltonian Systems with Jumps]{Backstepping Design for Incremental Stability of Stochastic Hamiltonian Systems with Jumps}
\author[P. Jagtap]{Pushpak Jagtap} 
\author[M. Zamani]{Majid Zamani} 

\address{$^1$Department of Electrical and Computer Engineering, Technical University of Munich, D-80290 Munich, Germany.}
\email{\{pushpak.jagtap,zamani\}@tum.de}
\urladdr{http://www.hcs.ei.tum.de}
\maketitle
\section{Introduction}
The notion of incremental stability focuses on the convergence of trajectories with respect to each other rather than with respect to an equilibrium point or a fixed trajectory. This notion of stability has gained significant attention in recent years due to its potential applications in the study of nonlinear systems. Examples of such applications include synchronization of cyclic feedback systems \cite{hamadeh2012global}; construction of symbolic models \cite{pola2008approximately}, \cite{majumdar2012approximately}; modeling of nonlinear analog circuits \cite{bond2010compact}; and synchronization of interconnected oscillators \cite{stan2007analysis}.   

In the past few years, there have been several results characterizing the notion of incremental stability for non-probabilistic dynamical systems using notions of so-called incremental Lyapunov functions and contraction metric. The interested readers may consult the results in \cite[and references therein]{angeli2002lyapunov,pavlov2006uniform,LOHMILLER1998683,zamani2011backstepping,zamani2013backstepping} for more detailed information about different characterizations of incremental stability. Furthermore, there have been several results on the construction of state feedback controllers enforcing a class of non-probabilistic control systems incrementally stable. Examples include results on smooth strict-feedback form systems \cite{zamani2011backstepping} and a class of (not-necessarily smooth) control systems \cite{zamani2013backstepping}.

In recent years, similar notions of incremental stability have been introduced for different classes of stochastic systems including stochastic control systems \cite{zamani2014symbolic}, stochastic switched systems \cite{zamani2015symbolic}, randomly switched stochastic systems \cite{zamani2014approximately}, and their descriptions using some notions of incremental Lyapunov functions. In addition, there have been several results in the literature studying incremental stability of stochastic systems using a notion of contraction metric. Examples include the results on stochastic dynamical systems \cite{pham2009contraction} and a class of stochastic hybrid systems \cite{zhang2013incremental}.
%, incremental stability of observer for stochastic systems \cite{dani2015observer} and probabilistic bisimulations of diffusion processes \cite{5400334}.

There exists a plethora of results for designing state feedback controllers enforcing some classes of stochastic systems stable with respect to an equilibrium point or a fixed trajectory. Examples include the results based on backstepping and inverse optimality \cite{deng1999output}, on strict-feedback form stochastic systems \cite{khoo2013finite}, based on passivity for stochastic port-Hamiltonian systems \cite{6361273}, on a backstepping approach for stochastic Hamiltonian systems \cite{wu2012backstepping}, 
on input-to-state stability of stochastic retarded systems \cite{huang2009input}, 
and finally on stabilization of jump stochastic systems \cite{lin2012dissipative}. 
%Furthermore, the stochastic jump systems have received a lot of attention due to the need for a rigorous modeling framework that can model random event-driven uncertainties along with continuous probabilistic noise. Such class of systems often observed in various engineering applications include  air traffic control \cite{glover2004stochastic}, biochemistry \cite{singh2010stochastic}, communication networks \cite{hespanha2004stochastic}, and systems biology \cite{hu2004modeling}, etc. In particular, stochastic hybrid systems where the time evolving and event driven mechanisms with multiple modes of operations involved. In literature, many researchers have investigated the stabilization and control problem of such systems interested reader can refer to \cite{xia2009adaptive}, \cite{mahmoud2007methodologies}, \cite{lin2012dissipative}.
Unfortunately, to the best of our knowledge, there is no work available in the literature on the synthesis of state feedback controllers rendering a class of nonlinear stochastic systems incrementally stable. This is unfortunate because, based on our motivation from symbolic control, incremental stability is a key property enabling the construction of bisimilar finite abstractions of continuous-time stochastic systems \cite{zamani2014symbolic,zamani2015symbolic,zamani2014approximately,zamani2016} and, hence, automated controller synthesis methodologies for this class of systems.

The main objective of this work is to propose, for the first time, a state feedback design scheme providing controllers enforcing a class of stochastic systems incrementally stable. The paper is divided in two major parts. In the first part, we introduce a coordinate invariant notion of incremental stability for stochastic control systems with jumps and provide its description in terms of existence of a notion of so-called incremental Lyapunov functions. In the second part, we provide a feedback controller design approach based on backstepping scheme providing controllers together with the corresponding incremental Lyapunov functions enforcing a class of stochastic control systems, namely, stochastic Hamiltonian systems with jumps, incrementally stable. Further, we illustrate the effectiveness of the proposed results by designing a state feedback controller making a spring pendulum subject to stochastically vibrating ceiling with jumps incrementally stable.

%In this work, we first introduce a notion of incremental stability for stochastic control systems with jumps and provide its description in terms of existence of a notion of incremental Lyapunov functions. Furthermore, we provide a feedback control design approach based on backstepping scheme providing controllers together with the corresponding incremental Lyapunov functions enforcing a class of stochastic control systems, namely stochastic Hamiltonian systems with jumps incrementally stable. We illustrate the effectiveness of the proposed results by designing a feedback controller making a controlled spring pendulum subjected to stochastically vibrating ceiling and random event driven uncertainties incrementally stable.

%The rest of the paper is organized as follow. Section \ref{II} provides some mathematical preliminaries, the definition of stochastic control systems, and a notion of incremental input-to-state stability. In Section \ref{III}, we propose a feedback controller design scheme based on backstepping approach for a class of stochastic Hamiltonian systems. Section \ref{IV} demonstrates the efficacy of our results on a physical case study. Finally, the paper is concluded in Section \ref{V}. 
\section{Stochastic Control Systems}\label{II}
\subsection{Notations}
The symbols $ \R$, $\R^+,$ and $\R_0^+ $ denote the set of real, positive, and non-negative real numbers, respectively. We use $ \R^{n\times m} $ to denote a vector space of real matrices with $ n $ rows and $ m $ columns. The identity matrix in $  \R^{n\times n} $ is denoted by $ I_n $ and zero matrix in $  \R^{n\times m} $ is denoted by $ 0_{n\times m} $. The $ e_i\in \R^n $ denotes the vector whose all elements are zero, except the $ i^{th} $ element, which is one. Given a vector \mbox{$x\in\mathbb{R}^{n}$}, we denote by $x_{i}$ the $i$-th element of $x$, and by $\Vert x\Vert$ the Euclidean norm of $x$. Given a matrix $ A \in \R^{n\times m}$, $ A^T $ represents transpose of matrix $ A $ and $ \|A\|_F $ represents the Frobenius norm of $ A $ defined as $ \|A\|_F=\sqrt{\mathsf{Tr}(AA^T)} $, where $ \mathsf{Tr}(\cdot) $ denotes the trace of a square matrix. For all $ x_i\in\R^{n_i} $, $ [x_1; x_2;\ldots; x_N] $ represents a vector in $ \R^n $ where $n=\sum_{i=1}^Nn_i$. The symbol $ A\otimes B $ represents the Kronecker product of matrices $ A $ and $ B $. The diagonal set $ \bigtriangleup \subset \R^{2n} $ is defined as $ \bigtriangleup=\{(x,x)|x\in \R^n\} $. A continuous function $ \gamma  :\R^+_0 \rightarrow \R^+_0 $ belongs to class $ \mathcal{K} $ if it is strictly increasing and $ \gamma(0)=0 $; it belongs to class $ \mathcal{K}_\infty $ if $ \gamma \in \mathcal{K}  $ and $ \gamma(r) \rightarrow \infty $ as $ r \rightarrow \infty $. A continuous function $ \beta:\R^+_0 \times \R^+_0 \rightarrow \R^+_0 $ belongs to class $\mathcal{KL}$ if for each fixed $ s $, the map $ \beta(r,s) $ belongs to $ \mathcal{K} $ with respect to $ r $ and, for each fixed $ r\neq0 $, the map $ \beta(r,s) $ is decreasing with respect to $ s $ and $ \beta(r,s) \rightarrow 0 $ as $ s\rightarrow \infty $. For any $x, y, z \in \R^n $, a function $\textbf{d}:\R^n \times \R^n \rightarrow \R^+_0$ is a metric on $ \R^n $ if the following conditions hold: (i) $ \textbf{d}(x,y)=0 $ if and only if $x=y$; (ii) $ \textbf{d}(x,y)= \textbf{d}(y,x) $; and (iii) $\textbf{d}(x,z)\leq \textbf{d}(x,y)+\textbf{d}(y,z)$. Given a measurable function $f: \R^+_0 \rightarrow \R^n $, the (essential) supremum of $ f $ is denoted by $ \|f\|_\infty $; we recall that $ \|f\|_\infty$ := (ess)sup$\{\|f(t)\|,t\geq0\}$.
\subsection{Stochastic control systems}
Let the triplet $(\Omega, \mathcal{F}, \mathds{P})$ denote a probability space with a sample space $ \Omega $, filtration $ \mathcal{F} $, and the probability measure $ \mathds{P} $. The filtration $\mathds{F}= (\mathcal{F}_s)_{s\geq 0}$ satisfies the usual conditions of right continuity and completeness \cite{oksendal}. Let $ (W_s)_{s\geq0} $ be an $ r $-dimensional $ \mathds{F} $-Brownian motion and $ (P_s)_{s\geq0} $ be an $ \tilde{r} $-dimensional $ \F $-Poison process. We assume that the Poisson process and the Brownian motion are independent of each other. The Poisson process $ P_s:=[P_s^1;\ldots; P_s^{\tilde{r}}] $ models $ \tilde{r} $ kinds of events whose occurrences are assumed to be independent of each other.
\begin{definition} \label{definition1}
	A stochastic control system is a tuple $\textstyle \Sigma_s=(\R^n, \R^m, \mathcal{U}, f, \sigma, \rho)$, where:
	\begin{compactitem}
		\item $\R ^n$ is the state space;
		\item $\R^m$ is the input space;
		\item $ \mathcal{U} $ is a subset of the set of all $\mathds{F}$-progressively measurable processes with values in $\R^m$; see \cite[Def. 1.11]{ref:KarShr-91};
		\item $f : \R ^n \times \R^m \rightarrow\R ^n $ satisfies the following Lipschitz assumption: there exist constants $L_x,L_u\in\R^+$ such that: $ \| f(x,u)-f(x',u') \| \leq L_x \| x-x'\|+L_u \| u-u'\|$ $ \forall  x, x'\in \R ^n$ and $ \forall u, u'\in \R^m $;
		\item $\sigma :\R^n \rightarrow \R^{n\times r}$ satisfies the following Lipschitz assumption: there exists a constant $ L_\sigma \in \R^+_0 $ such that: $ \| \sigma(x)-\sigma(x') \| \leq L_\sigma \| x-x' \|$ $\forall  x, x'\in \R ^n$;
		\item $\rho :\R^n \rightarrow \R^{n\times \tilde{r} }$ satisfies the following Lipschitz assumption: there exists a constant $ L_\rho \in \R^+_0 $ such that: $ \| \rho(x)-\rho(x') \| \leq L_\rho \| x-x' \|$ $\forall  x, x'\in \R ^n$.
	\end{compactitem}
\end{definition}
A stochastic process $\xi : \Omega \times \R^+_0\rightarrow \R^n$ is said to be a \textit{solution process} of $\Sigma_s$ if there exists $ \upsilon\in \mathcal{U} $ satisfying
\begin{equation}
	\diff \xi = f(\xi, \upsilon)\diff t + \sigma(\xi)\diff W_t+\rho(\xi)\diff P_t,
	\label{stochastic_Process}
\end{equation}
$ \mathds{P} $-almost surely ($ \mathds{P} $-a.s.), where $ f(\cdot) $, $ \sigma(\cdot) $, and $\rho(\cdot)$ are the drift, diffusion, and reset terms, respectively. We emphasize that postulated assumptions on $f$, $\sigma$, and $\rho$ ensure the existence and uniqueness of the solution process \cite{Oks-jump}. Throughout the paper, we use the notation $ \xi_{a\upsilon}(t) $ to denote the value of a solution process at time $ t \in \R^+_0 $ under the input signal $ \upsilon $ and with initial condition $ \xi_{a\upsilon}(0) = a $ $ \mathds{P} $-a.s., in which $ a $ is a random variable that is measurable in $ \mathcal{F}_0 $. Here, we assume that the Poisson processes $P_s^i$, for any $i\in\{1,\ldots,\tilde{r}\}$, have the constant rates of $ \lambda_i $.  
\subsection{Incremental stability for stochastic control systems}
This subsection introduces a coordinate invariant notion of incremental input-to-state stability for stochastic control systems. The stability notion discussed here is the generalization of the ones defined in \cite{zamani2013backstepping}, \cite{zamani2011backstepping} for non-probabilistic control systems. 
\begin{definition} \label{definition2}
	A stochastic control system $\textstyle \Sigma_s$ is incrementally input-to-state stable ($ \delta_\exists $-ISS-M$_k $) in the $ k^{th} $ moment, where $ k\geq1 $, if there exist a metric $\textbf{d}$, a $ \mathcal{KL} $ function $ \beta $, and a $ \mathcal{K}_\infty $ function $ \gamma $ such that for any $ t \in \R^+_0 $, any $ \R^n $-valued random variables $ a $ and $ a' $ that are measurable in $ \mathcal{F}_0 $, and any $\upsilon,\upsilon'\in\mathcal{U}$, the following condition is satisfied:
	\begin{equation}
		\begin{split}
			\EE[(\textbf{d}&(\xi_{a\upsilon}(t),\xi_{a'\upsilon'}(t)))^k]\leq\beta(\EE[(\textbf{d}(a,a'))^k],t)+\gamma(\EE[\|\upsilon-\upsilon'\|^k_\infty]).
		\end{split}
		\label{nnn}
	\end{equation} 
\end{definition}
\begin{remark}
		Note that if one uses the natural Euclidean metric rather than a general metric $\textbf{d}$ in Definition \ref{definition2}, the notion reduces to the one defined in \cite[Definition 3.1]{zamani2014symbolic} which is not invariant under changes of coordinates. Observer that changes of coordinates are one of the main tools used in the backstepping design scheme including the one proposed in this paper.
	\end{remark}
	One can describe $ \delta_\exists$-ISS-M$_k $ in terms of existence of $ \delta_\exists$-ISS-M$_k $ Lyapunov functions as defined next.
\begin{definition}\label{definition3}
	Consider a stochastic control system $\textstyle \Sigma_s $ and a continuous function $ V : \R^n \times \R^n\rightarrow\R^+_0 $ that is twice continuously differentiable on $  {\R^n \times \R^n} \setminus \bigtriangleup $. The function $ V $ is called a $ \delta_\exists$-ISS-M$_k $ Lyapunov function for $\textstyle \Sigma_s $, if it has polynomial growth rate and there exist a metric $\textbf{d}$, $ \mathcal{K}_\infty $ functions  $ \underline{\alpha}, \overline{\alpha}$, $ \varphi $, and a constant $ \kappa\in \R^+ $, such that:
	\begin{enumerate}
		\item[(i)] $ \underline{\alpha} $(resp. $ \overline{\alpha} $ and $ \varphi $) is a convex (resp. concave) function;
		\item[(ii)] $  \forall x, x'\in\R^n$, $\underline{\alpha}((\textbf{d}(x,x'))^k) \leq V(x, x')\leq\overline{\alpha}((\textbf{d}(x,x'))^k) $; 
		\item[(iii)] $  \forall x, x'\in\R^n, x\neq x', $ and $  \forall u, u'\in \R^m$,
	\end{enumerate}
	\begin{equation}
		%\small
		\begin{split}
			\mathcal{L}V(x,x'):=&\begin{bmatrix}\partial_x V(x,x') & \partial_{x'} V(x,x')\end{bmatrix}\begin{bmatrix}
				f(x,u)\\ 
				f(x',u')
			\end{bmatrix}+ \frac{1}{2}\mathsf{Tr}\bigg(\begin{bmatrix}
				\sigma(x)\\ 
				\sigma(x')
			\end{bmatrix} \begin{bmatrix}
			\sigma^T(x) &\hspace{-.5em} \sigma^T(x')
		\end{bmatrix}\begin{bmatrix}
		\partial_{x,x}V &\hspace{-.5em} \partial_{x,x'}V \\ 
		\partial_{x',x}V &\hspace{-.5em} \partial_{x',x'}V
	\end{bmatrix}\bigg)\\
	&+\sum_{i=1}^{\tilde{r}}\lambda_i\big(V(x+\rho(x)e_i,x'+\rho(x')e_i)-V(x,x')\big) \\ \leq& -\kappa V(x,x')+\varphi(\|u -u'\|^k),\nonumber
\end{split}
\end{equation}
\end{definition}
where $ \mathcal{L} $ is the infinitesimal generator of the stochastic process $\xi$ in (\ref{stochastic_Process}) acting on function $V$ \cite[equation (23)]{julius1} and the symbols $ \partial_{x} $ and $ \partial_{x,x'} $ represents first and second-order partial derivatives with respect to $ x $ and $ x' $, respectively. The following theorem describes $ \delta_\exists$-ISS-M$_k $ in terms of existence of $ \delta_\exists$-ISS-M$_k $ Lyapunov functions. \\ 

\begin{theorem}\label{theorem1}
	A stochastic control system $\textstyle \Sigma_s $ is $ \delta_\exists$-ISS-M$_k $ if it admits a $ \delta_\exists$-ISS-M$_k $ Lyapunov function.
\end{theorem}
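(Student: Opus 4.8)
The plan is a standard Lyapunov/comparison argument, carried out for the joint process and adapted to the jump-diffusion setting. Fix $t\in\R^+_0$, $\mathcal F_0$-measurable $\R^n$-valued random variables $a,a'$, and inputs $\upsilon,\upsilon'\in\mathcal U$; let $\textbf d,\underline\alpha,\overline\alpha,\varphi,\kappa$ be the data furnished by a $\delta_\exists$-ISS-M$_k$ Lyapunov function $V$. Consider the joint process $\zeta(s):=(\xi_{a\upsilon}(s),\xi_{a'\upsilon'}(s))$ on $\R^n\times\R^n$, which solves a stochastic differential equation with jumps whose infinitesimal generator acting on a test function that is $C^2$ near $\zeta(s)$ is precisely the operator $\mathcal L$ in Definition \ref{definition3}(iii). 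The first step is to apply Dynkin's formula to $V$ along $\zeta$:
\begin{equation*}
	\EE\big[V(\zeta(t))\big]=\EE\big[V(\zeta(0))\big]+\EE\!\left[\int_0^t \mathcal L V(\zeta(s))\,\diff s\right],
\end{equation*}
with all terms finite by the assumed polynomial growth of $V$ together with the standard moment bounds on $\zeta$ that follow from the Lipschitz hypotheses on $f,\sigma,\rho$.

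For the second step, I would use Definition \ref{definition3}(iii) pointwise along the trajectory together with $\|\upsilon(s)-\upsilon'(s)\|\le\|\upsilon-\upsilon'\|_\infty$ (for a.e.\ $s$, $\mathds P$-a.s.) and monotonicity of $\varphi$ to get $\mathcal L V(\zeta(s))\le-\kappa V(\zeta(s))+\varphi(\|\upsilon-\upsilon'\|_\infty^k)$. Taking expectations and applying Jensen's inequality to the concave $\varphi$, the function $g(s):=\EE[V(\zeta(s))]$ obeys $g(t)\le g(0)-\kappa\int_0^t g(s)\,\diff s+t\,\varphi(\EE[\|\upsilon-\upsilon'\|_\infty^k])$; the comparison (Grönwall) lemma then gives
\begin{equation*}
	\EE\big[V(\zeta(t))\big]\le e^{-\kappa t}\,\EE\big[V(\zeta(0))\big]+\tfrac{1}{\kappa}\,\varphi\big(\EE[\|\upsilon-\upsilon'\|_\infty^k]\big).
\end{equation*}

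The third step sandwiches this via Definition \ref{definition3}(ii) and Jensen's inequality once more: convexity of $\underline\alpha$ yields $\underline\alpha(\EE[(\textbf d(\xi_{a\upsilon}(t),\xi_{a'\upsilon'}(t)))^k])\le\EE[V(\zeta(t))]$, and concavity of $\overline\alpha$ yields $\EE[V(\zeta(0))]\le\overline\alpha(\EE[(\textbf d(a,a'))^k])$. Combining the last two displays, applying $\underline\alpha^{-1}$, and using $\underline\alpha^{-1}(r_1+r_2)\le\underline\alpha^{-1}(2r_1)+\underline\alpha^{-1}(2r_2)$ for $r_1,r_2\ge0$, we obtain \eqref{nnn} with $\beta(r,s):=\underline\alpha^{-1}(2e^{-\kappa s}\overline\alpha(r))$ and $\gamma(r):=\underline\alpha^{-1}(\tfrac{2}{\kappa}\varphi(r))$; a direct check shows $\beta\in\mathcal{KL}$ (for fixed $s$, a composition of $\mathcal K_\infty$ maps; for fixed $r$, decreasing to $0$ as $s\to\infty$) and $\gamma\in\mathcal K_\infty$.

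The step I expect to be the main obstacle is the very first one: since $V$ is only twice continuously differentiable on $(\R^n\times\R^n)\setminus\bigtriangleup$, the classical Dynkin formula does not apply verbatim, and Definition \ref{definition3}(iii) is only assumed off $\bigtriangleup$. I would handle this by a localization argument — stop $\zeta$ at the exit time from a ball of radius $m$ with a shrinking tube around $\bigtriangleup$ removed, where $V$ is $C^2$ and the formula is classical, and then let $m\to\infty$, passing the limit inside the expectation using the polynomial growth of $V$ and the moment bounds on $\zeta$ (dominated convergence / uniform integrability). On the event that the two trajectories coincide, the contribution is harmless: there $V$ and $\textbf d$ vanish; when $\upsilon=\upsilon'$ and $a=a'$ the trajectories agree for all time and \eqref{nnn} is trivial, and otherwise the occupation time of $\bigtriangleup$ is Lebesgue-null so the integral term is unaffected. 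The remaining estimates are routine.
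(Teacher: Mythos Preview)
Your proof is correct and follows essentially the same route as the paper: It\^o/Dynkin for the joint process, the generator bound from Definition~\ref{definition3}(iii), Gronwall, and then the Jensen sandwich via $\underline\alpha,\overline\alpha,\varphi$, arriving at the very same $\beta$ and $\gamma$. Your localization discussion around the diagonal is in fact more careful than the paper, which simply invokes It\^o's formula for jump diffusions without commenting on the lack of $C^2$ regularity on $\bigtriangleup$.
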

The proof is similar to the one of Theorem 3.3 in \cite{zamani2014symbolic} and is omitted due to lack of space.
\begin{proof}
For any time instance $t \ge 0$, any $\upsilon,\upsilon'\in\mathcal{U}$, and any random variable $a$ and $a'$ that are $\sigalg_0$-measurable, one obtains
	\begin{align}
		\notag \EE \left[ V(\xi_{a\upsilon}(t),\xi_{a'\upsilon'}(t)) \right] &= \EE \Big[V\big(\xi_{a\upsilon}(0),\xi_{a'\upsilon'}(0)\big) + \int_{0}^{t} \mathcal{L} V(\xi_{a\upsilon}(s),\xi_{a'\upsilon'}(s))\diff s\Big]
		\\ \notag & \le \EE \left[ V\big(\xi_{a\upsilon}(0),\xi_{a'\upsilon'}(0)\big)\right] +\EE\Big[\int_{0}^{t} \big(-\kappa V\big(\xi_{a\upsilon}(s),\xi_{a'\upsilon'}(s)\big) +\varphi(\Vert \upsilon(s)-\upsilon'(s)\Vert^k)\big) \diff s \Big]
		\\ \notag & \le \EE \left[ V\big(\xi_{a\upsilon}(0),\xi_{a'\upsilon'}(0)\big)\right] +\int_{0}^{t} \Big(-\kappa\EE\left[ V\big(\xi_{a\upsilon}(s),\xi_{a'\upsilon'}(s)\big)\right]+\EE\Big[\varphi(\Vert \upsilon-\upsilon'\Vert^k_\infty)\Big]\Big)\diff s,
	\end{align}
	where the first equality is an application of the It\^{o}'s formula for jump diffusions thanks to the polynomial rate of the function $V$ \cite[Theorem 1.24]{Oks-jump} and the first inequality is because of condition iii) in Definition \ref{definition3}.
%	, and the last inequality follows from Jensen's inequality due to the convexity assumption on the function $\eta$ \cite[p. 310]{Oksendal}. 
%	Let us define the process $y(t) \Let  \EE\big[V(\xi_{a\nu\omega}(t),\hat\xi_{\hat a\hat\nu\hat\omega}(t))\big]$. Note that in view of the It\^{o}'s formula, the process $y(\cdot)$ is continuous provided that the solution processes $\xi_{a\nu\omega}$ and $\hat\xi_{\hat a\hat\nu\hat\omega}$ have finite moments. This is indeed the case under our model setting in Definition \ref{Def_control_sys}, in particular due to the Lipschitz continuity of functions $f, \sigma, r,\hat f, \hat\sigma,\hat r$ \cite[1.19]{Oks-jump}.
%due to our model setting in Definition \ref{Def_control_sys}, in particular Lipschitz continuity of functions $f, \sigma, r$, all the moments of the solution process $\xi_{a\nu\omega}(t)$ are bounded \cite[]{continuity}. As such, inequality \eqref{ineq} indicates that the process $y(t)$ as defined above is also continuous. 
By virtue of Gronwall's inequality, one obtains
\begin{align}
\label{lyap_bnd} \EE[ V(\xi_{a\upsilon}(t),\xi_{a'\upsilon'}(t)) ]\le \EE[V(a,a')]\mathsf{e}^{-\kappa t}  + \frac{1}{\kappa}\EE\big[\varphi(\Vert \upsilon-\upsilon'\Vert^k_\infty)\leq\EE[V(a,a')]\mathsf{e}^{-\kappa t}  + \frac{1}{\kappa}\varphi\big(\EE\big[\Vert \upsilon-\upsilon'\Vert^k_\infty\big]\big),
\end{align}
where the last inequality follows from Jensen's inequality due to the concavity assumption on the function $\varphi$ \cite[p. 310]{oksendal}.
In view of Jensen's inequality, inequality \eqref{lyap_bnd}, the convexity of $\ul\alpha$, the concavity of $\ol\alpha$, and condition ii) in Definition \ref{definition3}, we have the following chain of inequalities	
	\begin{align}\notag
		{\ul\alpha}\big(\EE \big[ (\mathbf{d}(\xi_{a\upsilon}(t)&,\xi_{a'\upsilon'}(t)))^k\big]\big)  \le \EE \big[ {\ul\alpha}\big((\textbf{d}( \xi_{a\upsilon}(t),\xi_{a'\upsilon'}(t)))^k\big) \big] \le\EE \left[ V(\xi_{a\upsilon}(t),\xi_{a'\upsilon'}(t))\right] \\\notag&\le \EE[V(a, a')]\mathsf{e}^{-\kappa t}  + \frac{1}{\kappa}\varphi\big(\EE\big[\Vert \upsilon-\upsilon'\Vert^k_\infty\big]\big)\leq\EE\big[\ol\alpha\big((\textbf{d}( a, a'))^k\big)\big]\mathsf{e}^{-\kappa t}  + \frac{1}{\kappa}\varphi\big(\EE\big[\Vert \upsilon-\upsilon'\Vert^k_\infty\big]\big)
		\\ \notag &\le \ol\alpha\big(\EE\big[(\textbf{d}( a, a'))^k\big]\big)\mathsf{e}^{-\kappa t}  + \frac{1}{\kappa}\varphi\big(\EE\big[\Vert \upsilon-\upsilon'\Vert^k_\infty\big]\big),
	\end{align}
which in conjunction with the fact that $\ul\alpha \in \mathcal{K}_{\infty}$ leads to
	\begin{align*}
	\EE  \big[(\textbf{d} (\xi_{a\upsilon}(t),\xi_{a'\upsilon'}(t)))^k\big]&\le{\ul\alpha}^{-1} \Big(\ol\alpha\big(\EE\big[(\textbf{d}( a, a'))^k\big]\big)\mathsf{e}^{-\kappa t}  + \frac{1}{\kappa}\varphi\big(\EE\big[\Vert \upsilon-\upsilon'\Vert^k_\infty\big]\big)\Big) 
	\\ & \le \ul\alpha^{-1} \big(2\ol\alpha\big(\EE\big[(\textbf{d}( a, a'))^k\big]\big)\mathsf{e}^{-\kappa t} \big) + \ul\alpha^{-1}\Big(\frac{2}{\kappa}\varphi\big(\EE\big[\Vert \upsilon-\upsilon'\Vert^k_\infty\big]\big)\Big).
	\end{align*}
	Therefore, by introducing functions $\beta$ and $\gamma$ as
%	\begin{small}
	\begin{align}\label{com-fn}
		\beta(y,t) \Let {\ul\alpha}^{-1}\big(2\ol\alpha(y)\mathsf{e}^{-\kappa t}\big),~~ \gamma(y) \Let {\ul\alpha}^{-1}\Big(\frac{2}{\kappa}\varphi(y)\Big),
	\end{align}
%	\end{small}
	for any $y,t\in\R^+_0 $, inequality (\ref{nnn}) is satisfied. Note that if $\ul\alpha^{-1}$ satisfies the triangle inequality (i.e., $\ul\alpha^{-1}(a+b)\leq\ul\alpha^{-1}(a)+\ul\alpha^{-1}(b)$), one can remove the coefficients $2$ in the expressions of $\beta$ and $\gamma$ in \eqref{com-fn} to get a less conservative upper bound in \eqref{nnn}.
\end{proof}

\section{Backstepping Design Procedure}\label{III}
This section contains the main contribution of the paper. Here, we propose a backstepping control design scheme for a class of stochastic control systems, namely,  stochastic Hamiltonian systems with jumps. The proposed methodology provides controllers rendering the closed loop system $ \delta_\exists$-ISS-M$_k $.
A stochastic Hamiltonian system with jumps is a stochastic control system $\Sigma=(\R^{2n}, \R^n, \mathcal{U}, f, \sigma, \rho)$ described by stochastic differential equations  
\begin{equation}
	\textstyle \Sigma :\left\{\begin{split}
		\diff \q=& \partial_p H (\q,\p)\diff t,\\
		\diff \p=& \Big(-\partial_q H(\q,\p)+ b(\q,\p)+G(\q)\upsilon \Big)\diff t+\sigma(\q)\diff  W_t+\rho(\q)\diff P_t,
		\label{ham1}
	\end{split}\right.
\end{equation}
where $q= \q(\omega,t)\in \mathbb{R}^n $, $ \forall t \in \mathbb{R}_0^+$ and $\forall \omega\in\Omega$, is a generalized coordinate vector of $ n $-degree-of-freedom system; $ p= \p(\omega,t) \in \mathbb{R}^n $, $ \forall t \in \mathbb{R}_0^+$ and $\forall \omega\in\Omega$, represents a vector of generalized momenta and defined as $ \p\diff t=M(\q)\diff \q $, where $ M(q) $ is a symmetric, nonsingular, and positive definite inertia matrix; $ b(q,p) $ is a smooth damping term; $ G(q)\upsilon $ is the control force caused by $ G(q) $, a nonsingular smooth square matrix, and by control input $ \upsilon$ acting on the system; $\sigma(q)$ is the diffusion term; $\rho(q)$ is the reset term capturing the magnitude of jumps; and $ \partial_qH$ and $ \partial_p H$ represent first order partial derivative of function $ H$ with respect to $ q $ and $ p $, respectively, where $ H$ is a continuously differentiable Hamiltonian function represented in terms of total energy of the system as the following
\begin{equation}
	H(q,p)=\frac{1}{2}p^TM^{-1}(q)p+\Xi(q),
	\label{ham_fn}
\end{equation}  
where $ \Xi(q) $ represents potential energy of the system. By substituting (\ref{ham_fn}) into (\ref{ham1}), the dynamics of $ \Sigma $ can be rewritten as
\begin{equation}
	%\small
	\textstyle \Sigma:\left\{\begin{split}
		\diff \q=& M^{-1}(\q)\p\diff t,\\
		\diff \p=& \Big(\eta(\q,\p)+G(\q)\upsilon\Big)\diff t+\sigma(\q)\diff W_t+\rho(\q)\diff P_t,
	\end{split}\right.
	\label{ham2}
\end{equation} 
where $\eta(\q,\p)=-\partial_q H(\q,\p)+ b(\q,\p).$ 
\begin{remark}
		Note that the dynamic considered in (\ref{ham2}) is the generalization of the ones given in \cite{wu2012backstepping} and \cite{iwankiewicz2016dynamic}. It extends the former by including the jump term and the latter by adding the diffusion term and representing more general stochastic Hamiltonian systems.
	\end{remark}
	As we already emphasized after Definition \ref{definition1}, in order to ensure the existence and uniqueness of the solution process of $ \Sigma $ in (\ref{ham2}), one requires a Lipschitz assumption on the drift term which implies:
		\begin{equation}
			\|M^{-1}(q)p-M^{-1}(q')p'\|\leq L_1\|q-q'\|+L_2\|p-p'\|,
			\label{condition}
		\end{equation}
		for some $ L_1,L_2 \in\R^+$ and any $q,q',p,p'\in\R^n$.
	
	%As the Hamiltonian function $ H $ in (\ref{ham_fn}) is continuously differentiable, the $ \partial_qH $ and $ \partial_pH $ are globally Lipschitz continuous \cite{lorenz2010mutational}, \cite{graef2015nontrivial}; 
	We can now state the main result of the paper on the backstepping controller design scheme providing controllers rendering the considered class of stochastic control systems $ \delta_\exists$-ISS-M$_k$ for any $ k\geq2 $.
	
	\begin{theorem} \label{theorem2}
		Consider the stochastic control system $\Sigma$ of the form (\ref{ham2}). The state feedback control law 
		\begin{align}\notag
			\upsilon= &G^{-1}(\q)\Big(-\eta(\q,\p)+\kappa_1 \frac{\diff M(\q)}{\diff t} \q-\kappa_1^2M(\q)\q\\\label{thmeq}
			&-\Big(\lambda\frac{(2^{k-1}-1)}{k}+\frac{L_2}{s_1\varepsilon_1^{s_1}}+\frac{\min\{n,r\}L_\sigma^2(k-1)}{2s_2\varepsilon_2^{s_2}}\Big)\big(\p+\kappa_1M(\q)\q\big)+\hat{\upsilon}\Big),
		\end{align}
		renders the closed-loop stochastic control system $ \Sigma $ $\delta_\exists$-ISS-M$_k$ for $ k>2 $ with respect to the input $ \hat{\upsilon} $, for all 
		\[ \kappa_1> L_1+\frac{\max\{L_2,1\} \varepsilon _1^{r_1}}{r_1}+\frac{\min\{n,r\}L_\sigma^2\varepsilon _2^{r_2}(k-1)}{2r_2}+\frac{2^{k-1}L_\rho^k \lambda}{k} ,\] 
		where $r_1=\frac{k}{k-1}$, $s_1=k$, $r_2=\frac{k}{k-2}$, $s_2=\frac{k}{2}$, $ \varepsilon_1$ and $\varepsilon_2 $ are positive constants which can be chosen arbitrarily, $\lambda=\sum_{i=1}^{\tilde{r}}\lambda_i$, and $ L_1, L_2, L_\sigma$, and $L_\rho $ are the Lipschitz constants introduced in (\ref{condition}) and Definition \ref{definition1}, respectively.
	\end{theorem}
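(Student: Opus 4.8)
The plan is to apply Theorem~\ref{theorem1}: I would exhibit a $\delta_\exists$-ISS-M$_k$ Lyapunov function for the closed-loop system, regarded as a stochastic control system with input $\hat\upsilon$. The construction is a two-step backstepping argument. Since $-\kappa_1 M(\q)\q$ is the virtual control that would turn the first subsystem into $\diff\q=-\kappa_1\q\,\diff t$, I introduce the error coordinate $\zeta\Let\p+\kappa_1 M(\q)\q$. As $(\q,\p)\mapsto(\q,\zeta)$ is a global diffeomorphism of $\R^{2n}$, the formula $\textbf{d}((\q,\p),(\q',\p'))\Let\big(\|\q-\q'\|^2+\|\zeta-\zeta'\|^2\big)^{1/2}$, with $\zeta'=\p'+\kappa_1 M(\q')\q'$, defines a metric on $\R^{2n}$, and I would take as candidate
\[
  V(\q,\p,\q',\p')\Let\|\q-\q'\|^k+\|\zeta-\zeta'\|^k .
\]
Because $k>2$, $V$ is $C^2$ on $\R^{2n}\times\R^{2n}$ and of polynomial growth, and the norm equivalence $2^{1-k/2}\,\textbf{d}^k\le V\le\textbf{d}^k$ shows that conditions (i)--(ii) of Definition~\ref{definition3} hold with the linear (hence convex, resp.\ concave) functions $\ul\alpha(s)=2^{1-k/2}s$ and $\ol\alpha(s)=s$. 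Everything then reduces to the dissipation inequality (iii).

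The second step is to rewrite the closed loop in the coordinates $(\q,\zeta)$. Since $\diff\q=M^{-1}(\q)\p\,\diff t$ has bounded variation, $\diff\big(M(\q)\q\big)=\big(\tfrac{\diff M(\q)}{\diff t}\q+\p\big)\diff t$, so that $\diff\zeta=\big(\eta(\q,\p)+G(\q)\upsilon+\kappa_1\tfrac{\diff M(\q)}{\diff t}\q+\kappa_1\p\big)\diff t+\sigma(\q)\diff W_t+\rho(\q)\diff P_t$. The control law \eqref{thmeq} is engineered precisely to cancel $\eta$ together with the virtual-control-derivative terms and to insert a linear damping $-c\,\zeta$ (with $c$ a fixed combination of $\kappa_1$, $c_0$ and $\lambda$), leaving the closed loop in the form $\diff\q=\big(-\kappa_1\q+M^{-1}(\q)\zeta\big)\diff t$ and $\diff\zeta=\big(-c\,\zeta+\hat\upsilon\big)\diff t+\sigma(\q)\diff W_t+\rho(\q)\diff P_t$; the only surviving coupling between the two blocks is the off-diagonal term $M^{-1}(\q)\zeta$ in the $\q$-equation and the $\q$-dependence of $\sigma$ and $\rho$.

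Next I would evaluate $\mathcal{L}V$ (the generator of Definition~\ref{definition3}) along the closed loop. Write $z=\|\q-\q'\|$, $w=\|\zeta-\zeta'\|$, $V=V_1+V_2$ with $V_1=z^k$, $V_2=w^k$. Since $\q,\q'$ carry no noise, $\mathcal{L}V_1=k z^{k-2}\big\langle\q-\q',\,(-\kappa_1\q+M^{-1}(\q)\zeta)-(-\kappa_1\q'+M^{-1}(\q')\zeta')\big\rangle$, and bounding $\|M^{-1}(\q)\zeta-M^{-1}(\q')\zeta'\|\le L_1 z+L_2 w$ via \eqref{condition} gives $\mathcal{L}V_1\le-k(\kappa_1-L_1)z^k+kL_2\,z^{k-1}w$. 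For $V_2$, It\^o's formula for jump diffusions splits $\mathcal{L}V_2$ into a drift part $-kc\,w^k+k w^{k-2}\langle\zeta-\zeta',\hat\upsilon-\hat\upsilon'\rangle$; a second-order (diffusion) part which, using the Hessian $k w^{k-2}I+k(k-2)w^{k-4}(\zeta-\zeta')(\zeta-\zeta')^{T}$ and $\|\sigma(\q)-\sigma(\q')\|_F^2\le\min\{n,r\}L_\sigma^2 z^2$, is at most $\tfrac{k(k-1)}{2}\min\{n,r\}L_\sigma^2\,w^{k-2}z^2$ (the factor $k-1$ arising as $\tfrac{k}{2}+\tfrac{k(k-2)}{2}$); and a jump part $\sum_{i=1}^{\tilde r}\lambda_i\big(\|\zeta-\zeta'+(\rho(\q)-\rho(\q'))e_i\|^k-w^k\big)\le\lambda(2^{k-1}-1)w^k+2^{k-1}L_\rho^k\lambda\,z^k$, obtained from $\|(\rho(\q)-\rho(\q'))e_i\|\le L_\rho z$ and $(a+b)^k\le2^{k-1}(a^k+b^k)$. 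Adding everything, I would split each mixed term ($z^{k-1}w$, $w^{k-2}z^2$, and the input contribution $w^{k-1}\|\hat\upsilon-\hat\upsilon'\|$) into a pure $z^k$ piece and a pure $w^k$ piece (respectively $\|\hat\upsilon-\hat\upsilon'\|^k$ piece) by Young's inequality with the conjugate pairs $(r_1,s_1)=(\tfrac{k}{k-1},k)$ and $(r_2,s_2)=(\tfrac{k}{k-2},\tfrac{k}{2})$ and the free weights $\varepsilon_1,\varepsilon_2$. Collecting coefficients, the coefficient of $z^k$ is $\le-\kappa$ for some $\kappa>0$ exactly because $\kappa_1$ lies above the stated lower bound, the coefficient of $w^k$ is $\le-\kappa$ because of the particular value of $c_0$ in \eqref{thmeq}, and the residue is $\varphi(\|\hat\upsilon-\hat\upsilon'\|^k)$ for a concave $\mathcal{K}_\infty$ function $\varphi$. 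Hence $\mathcal{L}V\le-\kappa V+\varphi(\|\hat\upsilon-\hat\upsilon'\|^k)$, so $V$ is a $\delta_\exists$-ISS-M$_k$ Lyapunov function for the closed loop and Theorem~\ref{theorem1} delivers the claim.

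I expect the main obstacle to be twofold. One part is the It\^o second-order computation for the composite function $V_2=\|\zeta(\q,\p)-\zeta(\q',\p')\|^k$: the $\q$--$\p$ cross blocks of the Hessian must drop out because the noise enters only through $\p$, and one has to retain both the factor $\min\{n,r\}$ (from $\|\cdot\|_F$ versus $\|\cdot\|$) and the factor $k-1$. The more delicate part is the combined bookkeeping of all mixed terms: the direction and the weight of each Young split must be chosen so that, simultaneously, every $z^k$-contribution is absorbed into the available budget $-k(\kappa_1-L_1)z^k$ and every $w^k$-contribution is absorbed into the damping $-kc\,w^k$. It is this balancing act that pins down the exponents $(r_i,s_i)$, the constants $\tfrac{2^{k-1}-1}{k}$ and $\tfrac{2^{k-1}L_\rho^k\lambda}{k}$ appearing in \eqref{thmeq} and in the bound on $\kappa_1$, and that forces the restriction $k>2$ (so that $r_2=\tfrac{k}{k-2}$ is finite and the diffusion cross-term can be split at all).
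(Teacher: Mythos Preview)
Your proposal is correct and follows essentially the same route as the paper's proof: the same error coordinate $\zeta=\p+\kappa_1 M(\q)\q$, the same Lyapunov candidate $V_1+V_2$ (the paper carries a harmless factor $\tfrac{1}{k}$), the same Lipschitz and Young splits with the conjugate pairs $(r_1,s_1)$ and $(r_2,s_2)$ for the drift, trace, and jump contributions, and the same conclusion via Theorem~\ref{theorem1} with the metric $\textbf{d}(x,x')=\|\psi(x)-\psi(x')\|$ induced by the diffeomorphism $(\q,\p)\mapsto(\q,\zeta)$. The only cosmetic difference is that the paper first records that $V_1$ alone is a $\delta_\exists$-ISS-M$_k$ Lyapunov function for the $\q$-subsystem with $\zeta$ viewed as input, before assembling the full $V$.
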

	
	Note that the term $ \frac{\diff M(\q)}{\diff t} $ in the control law (\ref{thmeq}) can be computed by using the definition of derivative of matrix \cite{wu2012backstepping} as 
	\[\frac{\diff M(\q)}{\diff t}=\frac{\partial M(\q)}{\partial q^T}\times\Big(\frac{\diff \q}{\diff t} \otimes I_n\Big)=\frac{\partial M(\q)}{\partial q^T}\times\big(M^{-1}(\q)\p\otimes I_n\big),\] where $\frac{\partial M(\q)}{\partial q^T}:=\left[\frac{\partial M(\q)}{\partial q_1}~\frac{\partial M(\q)}{\partial q_2}\cdots\frac{\partial M(\q)}{\partial q_n}\right]_{n\times n^2}$.
	\begin{proof}
	Consider a coordinate transformation as
	\begin{equation}
		\zeta=\psi(\xi)=\begin{bmatrix}
			\zeta_1\\ 
			\zeta_2
		\end{bmatrix}=\begin{bmatrix}
		\q\\ 
		\p-\alpha(\q)
	\end{bmatrix},
	\label{cordi_tran}
\end{equation}
where $ \xi=[\q^T~ \p^T]^T $ and $ \alpha(\q) =-\kappa_1M(\q)\q$ for some $ \kappa_1>0 $. The dynamics of the stochastic control system $ \textstyle \Sigma $ in (\ref{ham2}) after the change of coordinates can be written by using Ito's differentiation \cite{oksendal} as
\begin{equation}
	\textstyle \hat{\Sigma}:\left\{\begin{split}
		\diff\zeta_1=& M^{-1}(\zeta_1)(\zeta_2+\alpha(\zeta_1))\diff t,\\
		\diff\zeta_2=& \Big(\eta(\zeta_1,\zeta_2+\alpha(\zeta_1))+G(\zeta_1)\upsilon+\kappa_1  \frac{\diff M(\zeta_1)}{\diff t}\zeta_1+\kappa_1\big(\zeta_2+\alpha(\zeta_1)\big)\Big)\diff t\\&+\sigma(\zeta_1)\diff W_t+\rho(\zeta_1)\diff P_t.
	\end{split}\right.
	\label{syst1}
\end{equation}
Now consider a candidate Lyapunov function $ V_1(z_1,z_1') $, $ \forall z_1,z_1' \in \mathbb{R}^n $, for the $ \zeta_1 $-subsystem as follows
\begin{equation}\nonumber
	V_1(z_1,z_1')=\frac{1}{k}\Big((z_1-z_1')^T(z_1-z_1')\Big)^{\frac{k}{2}}.
	\label{Lyap_1}
\end{equation}   
The corresponding infinitesimal generator along $ \zeta_1 $-subsystem is given by
\begin{equation}
	\begin{split}
		\mathcal{L}V_1(z_1,z_1')=(z_1-&z_1')^T\Big((z_1-z_1')^T(z_1-z_1')\Big)^{\frac{k}{2}-1}\\
		&\Big[\big(M^{-1}(z_1)z_2-M^{-1}(z_1')z_2'\big)+\big(M^{-1}(z_1)\alpha(z_1)-M^{-1}(z_1')\alpha(z_1')\big)\Big].
		\nonumber
	\end{split}
	\label{L1}
\end{equation}
Now by using the definition of $ \alpha(z_1) $, consistency of norm, and (\ref{condition}), the infinitesimal generator reduces to
\begin{equation}
	\begin{split}
		\mathcal{L}V_1(z_1,z_1')\leq &(L_1-\kappa_1)\Big((z_1-z_1')^T(z_1-z_1')\Big)^{\frac{k}{2}}+L_2 \big((z_1-z_1')^T(z_1-z_1')\big)^{\frac{k}{2}-1}\|z_1-z_1'\| \|z_2-z_2'\|.
	\end{split}
	\label{LLL}
\end{equation}
To handle the second term, we use Young's inequality \cite{krstic1995nonlinear} as 
\begin{equation}
	ab\leq\frac{\varepsilon^r}{r}|a|^r+\frac{1}{s\varepsilon^s}|b|^s,
	\label{young_inequality}
\end{equation}
where $ \varepsilon>0 $, constants $ r,s>1 $ satisfying condition $ (r-1)(s-1)=1 $, and $a,b\in \R$. Now by using the consistency of norms and applying Young's inequality (\ref{young_inequality}), we can reduce the second term in (\ref{LLL}) to 
\begin{align}\notag
	L_2&\big((z_1-z_1')^T(z_1-z_1')\big)^{\frac{k}{2}-1}\|z_1-z_1'\|\|z_2-z_2'\|= L_2\|z_1- z_1'\|^{k-1}\|z_2-z_2'\|\\\label{three}
	&\quad\quad\leq\frac{L_2 \varepsilon _1^{r_1}}{r_1}\Big(( z_1-  z_1')^T(z_1-z_1')\Big)^{\frac{k}{2}}+\frac{L_2}{s_1\varepsilon_1^{s_1}}\Big((z_2-z_2')^T(z_2-z_2')\Big)^{\frac{k}{2}},
\end{align}
where,$r_1=\frac{k}{k-1}$, $s_1=k$, and $ \varepsilon _1$ is any positive constant. After substituting inequality (\ref{three}) in (\ref{LLL}), one obtains
\begin{equation}\label{l11}
	\mathcal{L}V_1(z_1,z_1')\leq (L_1+\frac{L_2 \varepsilon _1^{r_1}}{r_1}-\kappa_1)\Big((z_1-z_1')^T(z_1-z_1')\Big)^{\frac{k}{2}}+\frac{L_2}{s_1\varepsilon_1^{s_1}}\Big((z_2-z_2')^T(z_2-z_2')\Big)^{\frac{k}{2}}.
\end{equation} 
One can readily verify that $ V_1 $ is a $ \delta_\exists$-ISS-M$_k $ function for $ \zeta_1 $-subsystem with respect to $ z_2 $ as the input provided that $L_1+\frac{L_2 \varepsilon _1^{r_1}}{r_1}-\kappa_1<0$. Function $V_1$ satisfies the conditions in Definition \ref{definition3} with $ \underline{\alpha}(y)=\overline{\alpha}(y)=\frac{1}{k}y $, \textbf{d} is the natural Euclidean matric, $ \kappa=k(\kappa_1-L_1-\frac{L_2 \varepsilon _1^{r_1}}{r_1}) $, and $ \varphi(y)=\frac{L_2}{s_1\varepsilon_1^{s_1}}y $, for any $ y\in\R^+_0 $.\\
Now consider a Lyapunov function $ V_2(z_2,z_2') $, $ \forall z_2,z_2' \in \mathbb{R}^n $,  for the $ \zeta_2 $-subsystem as 
\begin{equation}\nonumber
	V_2(z_2,z_2')=\frac{1}{k}\Big((z_2-z_2')^T(z_2-z_2')\Big)^{\frac{k}{2}}.
	\label{Lyap_2}
\end{equation} 
The respective infinitesimal generator is given by 
\begin{align}\notag
	\mathcal{L}V_2(z_2&,z_2')=(z_2-z_2')^T\Big((z_2-z_2')^T(z_2-z_2')\Big)^{\frac{k}{2}-1}\\\notag
	&\Big[\big(Gu+\eta+\kappa_1 \frac{\diff M}{\diff t} z_1+\kappa_1(z_2-\kappa_1Mz_1)\big)\hspace{-.2em}-\hspace{-.2em}\big(G'u'+\eta'+\kappa_1 \frac{\diff M'}{\diff t} z_1'+\kappa_1(z_2'-\kappa_1M'z_1')\big)\Big]\\\notag
	&+\frac{1}{2} \Tr\Big(\big(\sigma(z_1)-\sigma(z_1')\big)\big(\sigma(z_1)-\sigma(z_1')\big)^T\partial_{z_2z_2}V_2(z_2,z_2')\Big)\\\notag
	&+\frac{1}{k}\sum_{i=1}^{\tilde{r}}\lambda_i\Big(\big((z_2+\rho(z_1)e_i)-(z_2'+\rho(z_1')e_i)\big)^T\big((z_2+\rho(z_1)e_i)-(z_2'+\rho(z_1')e_i)\big)\Big)^{\frac{k}{2}}\\\label{l2}
	&-\frac{1}{k}\sum_{i=1}^{\tilde{r}}\lambda_i\Big((z_2-z_2')^T(z_2-z_2')\Big)^{\frac{k}{2}},
\end{align}
where $ G=G(z_1) $, $ \eta=\eta(z_1,z_2+\alpha(z_1)) $, $ M=M(z_1)$, $G'=G(z_1') $, $ \eta'=\eta(z_1',z_2'+\alpha(z_1')) $, and $ M'=M(z_1') $. The same abbreviation will be used in the rest of the proof. The first term can be simply handled by selecting proper control input $ u $ and the second term can be reduced using consistency of norm, Lipschitz assumption on the diffusion term $ \sigma(\cdot)  $ and the Young's inequality as follows
%\begin{equation}
\begin{align}
	&\frac{1}{2} \Tr\Big(\big(\sigma(z_1)-\sigma(z_1')\big)\big(\sigma(z_1)-\sigma(z_1')\big)^T\partial_{z_2z_2}V_2(z_2,z_2')\Big) \nonumber\\
	&=\frac{1}{2}\Tr\bigg(\big(\sigma(z_1)-\sigma(z_1')\big)\big(\sigma(z_1)-\sigma(z_1')\big)^T\Big[\Big((z_2-z_2')^T(z_2-z_2')\Big)^{\frac{k}{2}-1}I_n \nonumber\\ 
	&\ \ \ \ +(k-2)(z_2-z_2')(z_2-z_2')^T\Big((z_2-z_2')^T(z_2-z_2')\Big)^{\frac{k}{2}-2}\Big]\bigg)\nonumber\\
	&\leq\frac{k-1}{2}\|\sigma(z_1)-\sigma(z_1')\|_F^2\|z_2-z_2'\|^{k-2} \leq\frac{\min\{n,r\}(k-1)}{2}\|\sigma(z_1)-\sigma(z_1')\|^2\|z_2-z_2'\|^{k-2} \nonumber\\
	&\leq\frac{\min\{n,r\}L_\sigma^2(k-1)}{2}\|z_1-z_1'\|^2\|z_2-z_2'\|^{k-2} \nonumber\\
	&\leq\frac{\min\{n,r\}L_\sigma^2(k-1)}{2}\Big[\frac{\varepsilon _2^{r_2}}{r_2}\Big(( z_1-  z_1')^T(z_1-z_1')\Big)^{\frac{k}{2}}+\frac{1}{s_2\varepsilon_2^{s_2}}\Big((z_2-z_2')^T(z_2-z_2')\Big)^{\frac{k}{2}}\Big] \label{trace},
\end{align}

%\end{equation}
where
%\begin{equation}
%s_2=\frac{k}{k-2},r_2=\frac{k}{2},
%\label{one}
%\nonumber
%\end{equation}
$s_2=\frac{k}{k-2}$, $r_2=\frac{k}{2},$ and $ \varepsilon_2 $ is any positive constant. 
%Now for $ k=2 $, the trace term will simply reduces to 
%\begin{equation}
%\frac{1}{2} \Tr(\cdot)\leq \frac{L_\sigma^2}{2}(z_1-z_1')^T(z_1-z_1').
%\label{two}
%\end{equation}
%The third term in (\ref{l2}) can further be reduced as
%\begin{equation}
%\begin{split}
%\frac{1}{k}\sum_{i=1}^{\tilde{r}}\lambda_i&\Big[\Big(\big((z_2+\rho(z_1)e_i)-(z_2'+\rho(z_1')e_i)\big)^T\big((z_2+\rho(z_1)e_i)-(z_2'+\rho(z_1')e_i)\big)\Big)^{\frac{k}{2}}-\Big((z_2-z_2')^T(z_2-z_2')\Big)^{\frac{k}{2}}\Big]\\
%&\leq \frac{1}{k}\sum_{i=1}^{\tilde{r}}\lambda_i\Big[\|(z_2-z_2')+\big(\rho(z_1)e_i-\rho(z_1')e_i\big)\|^{k}-\Big((z_2-z_2')^T(z_2-z_2')\Big)^{\frac{k}{2}}\Big].
%\end{split}
%\nonumber
%\end{equation}
With the help of Jenson's inequality for convex functions \cite{aujla2003weak} and of Lipschitz assumption on the reset term $ \rho(\cdot) $ (cf. Definition \ref{definition1}), the third term in (\ref{l2}) can be reduced as
\begin{align}
	%\begin{split}
	\frac{1}{k}\sum_{i=1}^{\tilde{r}}\lambda_i&\Big[\|(z_2-z_2')+\big(\rho(z_1)e_i-\rho(z_1')e_i\big)\|^{k}-\Big((z_2-z_2')^T(z_2-z_2')\Big)^{\frac{k}{2}}\Big] \nonumber\\
	&\leq \frac{1}{k}\sum_{i=1}^{\tilde{r}}\lambda_i\Big[2^{k-1}\|z_2-z_2'\|^k+2^{k-1}L_\rho^k \|z_1-z_1'\|^{k}-\Big((z_2-z_2')^T(z_2-z_2')\Big)^{\frac{k}{2}}\Big]\nonumber\\
	&\leq \Big((z_2-z_2')^T(z_2-z_2')\Big)^{\frac{k}{2}}\frac{(2^{k-1}-1)\lambda}{k}+\Big((z_1-z_1')^T(z_1-z_1')\Big)^{\frac{k}{2}}\frac{2^{k-1}L_\rho^k \lambda}{k}\label{Poisson},
\end{align}
where $ \lambda=\sum_{i=1}^{\tilde{r}} \lambda_i$. Finally, the infinitesimal generator (\ref{l2}) corresponding to $V_2(z_2,z_2')$ can be reduced with the help of (\ref{trace}) and (\ref{Poisson}) to
\begin{align}	\notag
	\mathcal{L}&V_2(z_2,z_2')
	\leq\Big(( z_1-  z_1')^T(z_1-z_1')\Big)^{\frac{k}{2}}\Big(\frac{\min\{n,r\}L_\sigma^2\varepsilon _2^{r_2}(k-1)}{2r_2}+\frac{2^{k-1}L_\rho^k \lambda}{k}\Big)\\\notag
	&+(z_2-z_2')^T\Big((z_2-z_2')^T(z_2-z_2')\Big)^{\frac{k}{2}-1}\\\notag
	&\ \Big[\Big(Gu+\eta+\kappa_1 \frac{\diff M}{\diff t} z_1+\kappa_1(z_2-\kappa_1Mz_1)+\Big(\frac{(2^{k-1}-1)\lambda}{k}+\frac{\min\{n,r\}L_\sigma^2(k-1)}{2s_2\varepsilon_2^{s_2}}\Big)z_2\Big)\\
	&\ -\Big(G'u'+\eta'+\kappa_1 \frac{\diff M'}{\diff t}z_1'+\kappa_1(z_2'-\kappa_1M'z_1')+\Big(\frac{(2^{k-1}-1)\lambda}{k}+\frac{\min\{n,r\}L_\sigma^2(k-1)}{2s_2\varepsilon_2^{s_2}}\Big)z_2'\Big)\Big].\label{l3}
	%&+\frac{1}{s_2\varepsilon_2^{s_2}}\Big((z_2-z_2')^T(z_2-z_2')\Big)^{\frac{k}{2}}\Big]\\
	%+&\Big((z_2-z_2')^T(z_2-z_2')\Big)^{\frac{k}{2}}\sum_{i=1}^{\tilde{r}}\frac{(2^{k-1}-1)\lambda_i}{k},
\end{align} 
Now consider the Lyapunov function $ V $ for the overall system (\ref{syst1}) as $ V(z,z')=V_1(z_1,z_1')+V_2(z_2,z_2') $ and the respective infinitesimal generator can be obtained by using (\ref{l11}) and (\ref{l3}) as
\begin{equation}
	\begin{split}
		&\mathcal{L}V(z,z')\leq\Big(L_1+\frac{L_2 \varepsilon _1^{r_1}}{r_1}+\frac{\min\{n,r\}L_\sigma^2\varepsilon _2^{r_2}(k-1)}{2r_2}+\frac{2^{k-1}L_\rho^k \lambda}{k}-\kappa_1\Big)\Big((z_1-z_1')^T(z_1-z_1')\Big)^{\frac{k}{2}}\\
		&+(z_2-z_2')^T\Big((z_2-z_2')^T(z_2-z_2')\Big)^{\frac{k}{2}-1}\\
		&\Big[\Big(Gu+\eta+\kappa_1 \frac{\diff M}{\diff t} z_1+\kappa_1(z_2-\kappa_1Mz_1)\hspace{-.2em}+\hspace{-.2em}\Big(\frac{(2^{k-1}-1)\lambda}{k}+\frac{L_2}{s_1\varepsilon_1^{s_1}}+\frac{\min\{n,r\}L_\sigma^2(k-1)}{2s_2\varepsilon_2^{s_2}}\Big)z_2\Big)-\\
		&\Big(G'u'+\eta'+\kappa_1 \frac{\diff M'}{\diff t}z_1'+\kappa_1(z_2'-\kappa_1M'z_1')\hspace{-.2em}+\hspace{-.2em}\Big(\frac{(2^{k-1}-1)\lambda}{k}\hspace{-.2em}+\hspace{-.2em}\frac{L_2}{s_1\varepsilon_1^{s_1}}\hspace{-.2em}+\hspace{-.2em}\frac{\min\{n,r\}L_\sigma^2(k-1)}{2s_2\varepsilon_2^{s_2}}\Big)z_2'\Big)\Big].\\
	\end{split}
	\label{l4}
\end{equation}
If we choose the state feedback control law $ u(z_1,z_2) $ as
\begin{equation}
	\begin{split}
		u(z_1,z_2)=& G^{-1}\Big(-\eta-\kappa_1 \frac{\diff M}{\diff t} z_1+\kappa_1^2Mz_1\\&-\Big(2\kappa_1+\frac{(2^{k-1}-1)\lambda}{k}+\frac{L_2}{s_1\varepsilon_1^{s_1}}+\frac{\min\{n,r\}L_\sigma^2(k-1)}{2s_2\varepsilon_2^{s_2}}\Big)z_2+\hat{u}\Big),
		\nonumber
	\end{split}
	\label{control_u}
\end{equation}
%\begin{equation}
%\kappa_1> L_1+\frac{L_2 \varepsilon _1^{r_1}}{r_1}+\frac{L_\sigma^2\varepsilon _2^{r_2}(k-1)\min\{n,r\}}{2r_2}+\sum_{i=1}^{\tilde{r}}\frac{2^{k-1}L_\rho^k \lambda_i}{k},
%\label{condition1}
%\end{equation}
where $ \hat u $ is a new control input with respect to which the closed-loop system will be shown to be $ \delta_\exists$-ISS-M$_k$. After using $ u(z_1,z_2) $, the inequality (\ref{l4}) reduces to
\begin{align}\notag
	\mathcal{L}&V(z,z')\leq \hspace{-.2em}-\Big(\kappa_1\hspace{-.2em}-\big(L_1+\frac{L_2 \varepsilon _1^{r_1}}{r_1}\hspace{-.2em}+\hspace{-.2em}\frac{\min\{n,r\}L_\sigma^2\varepsilon _2^{r_2}(k-1)}{2r_2}\hspace{-.2em}+\hspace{-.2em}\frac{2^{k-1}L_\rho^k \lambda}{k}\big)\Big)\Big((z_1\hspace{-.2em}-\hspace{-.2em}z_1')^T(z_1\hspace{-.2em}-\hspace{-.2em}z_1')\Big)^{\frac{k}{2}}\\\label{asdf}
	&-\kappa_1\Big((z_2-z_2')^T(z_2-z_2')\Big)^{\frac{k}{2}}
	+(z_2-z_2')^T\Big((z_2-z_2')^T(z_2-z_2')\Big)^{\frac{k}{2}-1}(\hat{u}-\hat{u}').
\end{align}
Now the third term can further be reduced by applying Young's inequality to 
\begin{equation}
	\begin{split}
		(z_2-z_2')^T\Big((z_2-z_2')^T(z_2-z_2')&\Big)^{\frac{k}{2}-1}(\hat{u}-\hat{u}')
		\leq\|z_2-z_2\|^{k-1}\|\hat{u}-\hat{u}'\|\\
		&\leq\frac{\varepsilon _1^{r_1}}{r_1}\Big(( z_2-  z_2')^T(z_2-z_2')\Big)^{\frac{k}{2}}+\frac{1}{s_1\varepsilon_1^{s_1}}\|\hat{u}-\hat{u}'\|^k,
	\end{split}
	\label{yong_2}
\end{equation}
where the parameters $ \varepsilon_1, s_1 $ and $ r_1 $ are the same as the ones in (\ref{three}). Using \eqref{yong_2}, inequality (\ref{asdf}) reduces to
\begin{equation}
	\begin{split}
		\mathcal{L}V(z,z')\leq& -c_1\Big((z_1-z_1')^T(z_1-z_1')\Big)^{\frac{k}{2}}-c_2\Big((z_2-z_2')^T(z_2-z_2')\Big)^{\frac{k}{2}}+c_3\|\hat{u}-\hat{u}'\|^k,
	\end{split}
	\label{asd}
\end{equation}
where $c_1=\Big(\kappa_1-\big(L_1+\frac{L_2 \varepsilon _1^{r_1}}{r_1}+\frac{\min\{n,r\}L_\sigma^2\varepsilon _2^{r_2}(k-1)}{2r_2}+\frac{2^{k-1}L_\rho^k \lambda}{k}\big)\Big)$, $c_2=\Big(\kappa_1-\frac{\varepsilon _1^{r_1}}{r_1}\Big)$, $ c_3=\frac{1}{s_1\varepsilon_1^{s_1}} $, all required to be positive. By choosing the design parameter $ \kappa_1 $ as 
\begin{equation}
	\kappa_1> L_1+\frac{\max\{L_2,1\} \varepsilon _1^{r_1}}{r_1}+\frac{\min\{n,r\}L_\sigma^2\varepsilon _2^{r_2}(k-1)}{2r_2}+\frac{2^{k-1}L_\rho^k \lambda}{k},
	\label{condition1}
	\nonumber
\end{equation} 
one obtains $ c_1 $, $ c_2 $, $ c_3>0 $.
\\If $ \kappa=\min\{kc_1,kc_2\} $, the inequality (\ref{asd}) can further be reduced to  
\begin{equation}
	\begin{split}
		\mathcal{L}V\leq& -\kappa V(z,z')+\varphi(\|\hat{u}-\hat{u}'\|^k),
	\end{split}
	\label{pqrs0}
\end{equation}
where $ \varphi(y)=c_3y, \forall y \in \mathbb{R}_0^+ $, which satisfies condition (iii) of Definition \ref{definition3}. One can readily verify that conditions (i) and (ii) in Definition \ref{definition3} are satisfied by defining metric \textbf{d} as the natural Euclidean one, and defining $ \underline{\alpha}(y)=\dfrac{1}{2^{\frac{k}{2}-1}k}y $, and $ \overline{\alpha}(y)=\dfrac{1}{k}y, \forall y\in\mathbb{R}_0^+ $. Now with the help of Theorem \ref{theorem1}, one obtains
\begin{equation}
	\begin{split}
		\EE[\|\zeta_{z\hat{\upsilon}}(t)&-\zeta_{z'\hat{\upsilon}'}(t)\|^k]\leq \beta(\EE[\|z-z'\|^k],t)+\gamma(\EE[\|\hat{\upsilon}-\hat{\upsilon}'\|^k_\infty]),
	\end{split}
	\label{pqrs}
\end{equation}
where $ \zeta_{z\hat{\upsilon}}(t) $ denotes the value of the solution process of $ \hat{\Sigma} $ in (\ref{syst1})  at time $ t\in \mathbb{R}_0^+ $ under the input signal $ \hat{\upsilon} $ and from the initial condition $ \zeta_{z\hat{\upsilon}}(0)=z $ $ \PP $-a.s. The $ \mathcal{KL} $ function $ \beta $, and the $ \mathcal{K}_\infty $ function $ \gamma $ can be defined as
\begin{equation}
	\begin{split}
		\beta(y,t)&=\underline{\alpha}^{-1}(\overline{\alpha}(y) \mathsf{e}^{-\kappa t})=2^{\frac{k}{2}-1}\mathsf{e}^{-\kappa t}y,~~\gamma(y)=\underline{\alpha}^{-1}(\frac{\varphi(y)}{\kappa})=\frac{2^{\frac{k}{2}-1}k}{\kappa}c_3y,
		\label{pqrs1}
	\end{split}
\end{equation}
for all $ y \in \mathbb{R}_0^+ $. Now by applying the change of coordinate $ \zeta=\psi(\xi) $, where $ \xi=[\q^T~\p^T]^T $, the control law $\upsilon $ reduces to 
\begin{equation}
	\begin{split}
		\upsilon= G^{-1}(\q)\Big(&-\eta(\q,\p)+\kappa_1 \frac{\diff M(\q)}{\diff t} \q-\kappa_1^2M(\q)\q\\
		&-\Big(\frac{(2^{k-1}-1)\lambda}{k}+\frac{L_2}{s_1\varepsilon_1^{s_1}}+\frac{\min\{n,r\}L_\sigma^2(k-1)}{2s_2\varepsilon_2^{s_2}}\Big)\big(\p+\kappa_1M(\q)\q\big)+\hat{\upsilon}\Big),
	\end{split}
	\label{control_u1}
\end{equation}
and (\ref{pqrs}) can be rewritten as
\begin{equation}
	\begin{split}
		\EE[\|&\psi(\xi_{x\hat{\upsilon}}(t))-\psi(\xi_{x'\hat{\upsilon}'}(t))\|^k] \leq \beta(\EE[\|\psi(x)-\psi(x')\|^k],t)+\gamma(\EE[\|\hat{\upsilon}-\hat{\upsilon}'\|^k_\infty]),
	\end{split}
	\label{abc1}
\end{equation}
where $ x=[q^T~p^T]^T $. By defining a metric\footnote{Since $\psi$ is a bijective function, $\textbf{d}$ satisfies all the requirements of a metric.} $ \textbf{d}(x,x')=\|\psi(x)-\psi(x')\| $, we can rewrite (\ref{abc1}) as
\begin{equation}
	\begin{split}
		\EE[(\textbf{d}(\xi_{x\hat{\upsilon}}(t), &\xi_{x'\hat{\upsilon}'}(t)))^k] \leq \beta(\EE[(\textbf{d}(x,x'))^k],t)+\gamma(\EE[\|\hat{\upsilon}-\hat{\upsilon}'\|^k_\infty]),
	\end{split}
	\label{prop}
\end{equation}
which satisfies condition (\ref{nnn}) for original $ \Sigma $. Hence, $\Sigma$ in (\ref{ham2}) equipped with the feedback control law (\ref{control_u1}) is $ \delta_\exists$-ISS-M$_k$ for any $ k>2 $.
\end{proof}
The next corollary provides the same results as the ones in Theorem \ref{theorem2} but for $ k=2 $.

\begin{corollary}
	Consider the stochastic control system $ \Sigma $ in (\ref{ham2}). The state feedback control law 
	\begin{equation}
		\upsilon= G^{-1}(\q)\Big(-\eta(\q,\p)-\kappa_1 \frac{\diff M(\q)}{\diff t} \q+\kappa_1^2M(\q)\q-\Big(2\kappa_1+\frac{\lambda}{2}+\frac{L_2}{2\varepsilon_1^{2}}\Big)\big(\p+\kappa_1M(\q)\q\big)+\hat{\upsilon}\Big), \nonumber
	\end{equation}
	renders the closed-loop stochastic control system $\delta_\exists$-ISS-$ M_2$ with respect to input $ \hat{\upsilon} $, for all 
	\[ \kappa_1> L_1+\frac{\max\{L_2,1\} \varepsilon _1^{2}}{2}+\frac{\min\{n,r\}L_\sigma^{2}}{2}+L_\rho^2 \lambda ,\] 
	where $ \varepsilon_1 $ is any positive constant which can be chosen arbitrarily, and $ L_1, L_2, L_\sigma$, and $L_\rho $ are the Lipschitz constants introduced in (\ref{condition}) and Definition \ref{definition1}, respectively.
\end{corollary}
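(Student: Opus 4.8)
The plan is to repeat the argument in the proof of Theorem \ref{theorem2} verbatim, specialized to $k=2$, the only genuine difference being the treatment of the diffusion (trace) term. First I would introduce the same coordinate transformation $\zeta=\psi(\xi)$ from \eqref{cordi_tran} with $\alpha(\q)=-\kappa_1 M(\q)\q$, so that the dynamics become $\hat\Sigma$ as in \eqref{syst1}, and take the candidate Lyapunov functions $V_1(z_1,z_1')=\tfrac12(z_1-z_1')^T(z_1-z_1')$ and $V_2(z_2,z_2')=\tfrac12(z_2-z_2')^T(z_2-z_2')$ for the two subsystems; note that for $k=2$ these are globally $C^\infty$, so the differentiability hypothesis in Definition \ref{definition3} is trivially met. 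Computing $\mathcal{L}V_1$ along the $\zeta_1$-subsystem and applying \eqref{condition} together with Young's inequality \eqref{young_inequality} with $r_1=s_1=2$ gives, exactly as in \eqref{l11},
\[
\mathcal{L}V_1(z_1,z_1')\leq\Big(L_1+\tfrac{L_2\varepsilon_1^2}{2}-\kappa_1\Big)\|z_1-z_1'\|^2+\tfrac{L_2}{2\varepsilon_1^2}\|z_2-z_2'\|^2 .
\]

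Next I would compute $\mathcal{L}V_2$ along the $\zeta_2$-subsystem. Here lies the one point where $k=2$ is special: the Hessian $\partial_{z_2z_2}V_2$ equals $I_n$ (the $(k-2)$-term in the computation preceding \eqref{trace} vanishes), so the diffusion term collapses to
\[
\tfrac12\Tr\big((\sigma(z_1)-\sigma(z_1'))(\sigma(z_1)-\sigma(z_1'))^T\big)=\tfrac12\|\sigma(z_1)-\sigma(z_1')\|_F^2\leq\tfrac{\min\{n,r\}L_\sigma^2}{2}\|z_1-z_1'\|^2 ,
\]
which is already a pure function of $z_1$; in particular no second Young's inequality (and hence no parameter $\varepsilon_2$, no $r_2,s_2$) is needed, which is exactly why those constants are absent from the statement. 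The jump term is handled as in \eqref{Poisson}, using the elementary bound $\|(z_2-z_2')+w\|^2\le 2\|z_2-z_2'\|^2+2\|w\|^2$ and the Lipschitz property of $\rho$, giving a contribution $\tfrac{\lambda}{2}\|z_2-z_2'\|^2+L_\rho^2\lambda\|z_1-z_1'\|^2$.

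Then I would set $V(z,z')=V_1(z_1,z_1')+V_2(z_2,z_2')$ and choose the feedback $u(z_1,z_2)$ so as to cancel the $\eta$, $\tfrac{\diff M}{\diff t}z_1$, and $\kappa_1 M z_1$ terms and the coefficient $\big(2\kappa_1+\tfrac{\lambda}{2}+\tfrac{L_2}{2\varepsilon_1^2}\big)z_2$ appearing in $\mathcal{L}V_2$, leaving a residual input $\hat{u}$; after one more application of Young's inequality with $r_1=s_1=2$ to the $\hat{u}$-cross term, I obtain
\[
\mathcal{L}V(z,z')\leq -c_1\|z_1-z_1'\|^2-c_2\|z_2-z_2'\|^2+c_3\|\hat{u}-\hat{u}'\|^2
\]
with $c_1=\kappa_1-\big(L_1+\tfrac{L_2\varepsilon_1^2}{2}+\tfrac{\min\{n,r\}L_\sigma^2}{2}+L_\rho^2\lambda\big)$, $c_2=\kappa_1-\tfrac{\varepsilon_1^2}{2}$, and $c_3=\tfrac{1}{2\varepsilon_1^2}$; the stated lower bound on $\kappa_1$ (with $\max\{L_2,1\}$ guarding both $c_1>0$ and $c_2>0$) makes all three positive. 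Since $V(z,z')=\tfrac12\|z-z'\|^2$, conditions (i), (ii), and (iii) of Definition \ref{definition3} hold with $\ul\alpha(y)=\ol\alpha(y)=\tfrac12 y$, $\varphi(y)=c_3y$, $\kappa=\min\{2c_1,2c_2\}$, and $\textbf{d}$ the Euclidean metric, so $\hat\Sigma$ is $\delta_\exists$-ISS-M$_2$ by Theorem \ref{theorem1}. Finally, undoing the change of coordinates and defining $\textbf{d}(x,x')=\|\psi(x)-\psi(x')\|$ exactly as at the end of the proof of Theorem \ref{theorem2} transfers the estimate to the original system $\Sigma$. I do not anticipate a real obstacle: the whole argument is the $k=2$ instance of the previous proof, and the only thing requiring care is recognizing that the diffusion term needs no splitting when the Hessian of $V_2$ is the identity.
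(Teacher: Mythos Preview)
Your proposal is correct and follows essentially the same route as the paper: both specialize the proof of Theorem~\ref{theorem2} to $k=2$ and identify the single substantive change, namely that $\partial_{z_2z_2}V_2=I_n$ so the trace term is bounded directly by $\tfrac{\min\{n,r\}L_\sigma^2}{2}\|z_1-z_1'\|^2$ without a second application of Young's inequality (hence no $\varepsilon_2,r_2,s_2$). Your write-up is in fact more explicit than the paper's proof, which simply records the modified trace bound and defers the remainder to the argument of Theorem~\ref{theorem2}.
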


\begin{proof}
	The corollary is a particular case of Theorem \ref{theorem2}. The proof is almost similar to that of Theorem \ref{theorem2} by substituting $ k=2 $. The only difference will appear while handling the trace term (\ref{trace}) in $ \zeta_2 $-subsystem which is now given by
	\begin{equation}
		\begin{split}
			\frac{1}{2} &\Tr\Big(\big(\sigma(z_1)-\sigma(z_1')\big)\big(\sigma(z_1)-\sigma(z_1')\big)^T\partial_{z_2z_2}V_2(z_2,z_2')\Big)\\&\leq \frac{1}{2} \Tr\Big(\big(\sigma(z_1)-\sigma(z_1')\big)\big(\sigma(z_1)-\sigma(z_1')\big)^T\Big)\leq \frac{\min\{n,r\}L_\sigma^2}{2}(z_1-z_1')^T(z_1-z_1').
			\nonumber
		\end{split}
	\end{equation} 
	The rest of the proof follows similarly the one of Theorem \ref{theorem1}.
\end{proof}

\begin{remark} \label{remark1}
	Assume that for all $ x, x'\in\R^n $, the change of coordinate map $\psi$ in (\ref{cordi_tran}) satisfies 
	\begin{equation}
		\underline{\chi}(\|x-x'\|^k)\leq\|\psi(x)-\psi(x')\|^k\leq \overline{\chi}(\|x-x'\|^k),
		\label{new}
		\nonumber
	\end{equation}
	for some $ \mathcal{K}_\infty $ convex function $ \underline{\chi} $ and $ \mathcal{K}_\infty $ concave function $ \overline{\chi} $. Then, inequality (\ref{prop}) for the original system $ \Sigma $ reduces to
	\[\EE[\|\xi_{x\hat{\upsilon}}(t)-\xi_{x'\hat{\upsilon}'}(t)\|^k]\leq \hat{\beta}(\EE[\|x-x'\|^k],t)+\hat{\gamma}(\EE[\|\hat{\upsilon}-\hat{\upsilon}'\|^k_\infty]),\]
	for the $ \mathcal{KL} $ function $ \hat{\beta}(y,t)= \underline{\chi}^{-1}(2\beta(\overline{\chi}(y),t))$ and the $ \mathcal{K}_\infty $ function $ \hat{\gamma}(y)= \underline{\chi}^{-1}(2\gamma(y))$, for any $y,t\in\R^+_0 $. Note that if $\ul\chi^{-1}$ satisfies the triangle inequality (i.e., $\ul\chi^{-1}(a+b)\leq\ul\chi^{-1}(a)+\ul\chi^{-1}(b)$), one can remove the coefficients $2$ in the expressions of $\hat\beta$ and $\hat\gamma$.
	Particularly, if the inertia matrix ($ M $) is constant, one has 
	\begin{equation}
		\begin{split}
			\left \| 
			\begin{bmatrix}
				q-q' \\	(p+\kappa_1Mq)-(p'+\kappa_1Mq') \end{bmatrix} \right \|
			= \left \|A \begin{bmatrix}q-q'\\ p-p'\end{bmatrix} \right \|=\|A(x-x')\|,
		\end{split}
		\nonumber
	\end{equation}
	where $A$ is a constant matrix given by
	\[A=\begin{bmatrix}
	I_n & 0_n\\ 
	\kappa_1M& I_n
	\end{bmatrix}.\]
	Therefore, one obtains
	\[(\lambda_{\min}(A^TA))^{\frac{k}{2}}\|x-x'\|^k\leq\|\psi(x)-\psi(x')\|^k=\|A(x-x')\|^k\leq(\lambda_{\max}(A^TA))^{\frac{k}{2}}\|x-x'\|^k,\]
	%implying that the map $ \psi $ satisfies inequalities in \ref{new} for linear $ \mathcal{K}_\infty $ functions $ \underline{\chi}(y)= (\lambda_{\min}(A^TA))^{\frac{k}{2}}y$ and $ \overline{\chi}(y)=(\lambda_{\max}(A^TA))^{\frac{k}{2}} $, $\forall y\in \R^+_0$, 
	where $ \lambda_{\min}(A^TA) $ and $ \lambda_{\max}(A^TA) $ denote minimum and maximum eigenvalues of $ A^TA $, respectively.
\end{remark} 

\section{Case Study}\label{IV}
\begin{figure}
	\centering
	\includegraphics[scale=.25]{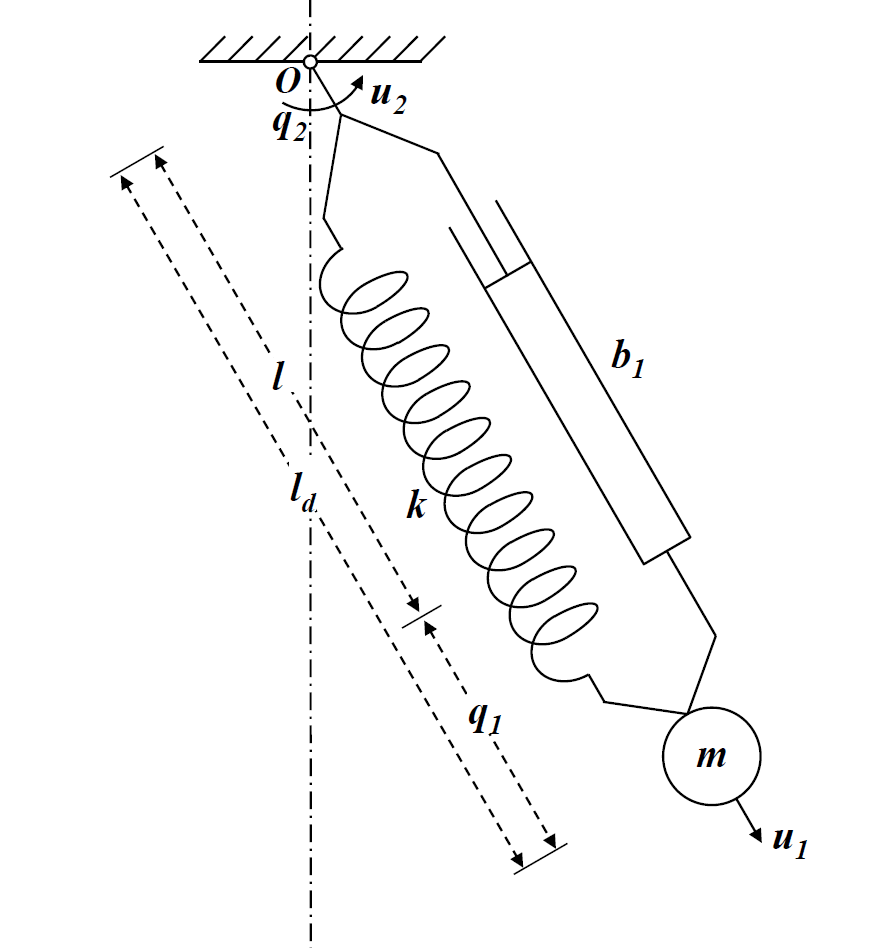}
	\caption{Controlled spring pendulum}
	\label{fig1}
	\vspace{-5mm}
\end{figure}
To verify the efficacy of the control design framework proposed in this paper, we illustrate the results on a spring pendulum attached to stochastically vibrating ceiling and subject to random  jumps such as sudden jerks due to interaction with environmental disturbances. The nonlinear dynamics of the considered system is borrowed from \cite{wu2012backstepping}, now affected by jumps and schematically shown in Figure \ref{fig1}. Let us define the generalized coordinate vector as $\q =[\q_1~ \q_2]^T$, where $ \q_1 $ represents change of arm length as a difference between the dynamic length ($ l_d $) and static length ($ l $) of a spring pendulum; and $ \q_2 $ is the angle of pendulum with vertical axis. The corresponding generalized momenta vector is given by $ \p =[m\frac{\diff \q_1}{\diff t}~~ m(l+\q_1)^2\frac{\diff \q_2}{\diff t}]^T$, where $ m $ is the mass of the ball, which gives the inertia matrix $ M(q) $ as
\begin{equation}
	M(q)=\begin{bmatrix}
		m & 0\\ 
		0 & m(l+q_1)^2
	\end{bmatrix}.
	\label{mass}
\end{equation}
\begin{figure}[t]
	\centering
	\includegraphics[scale=0.7]{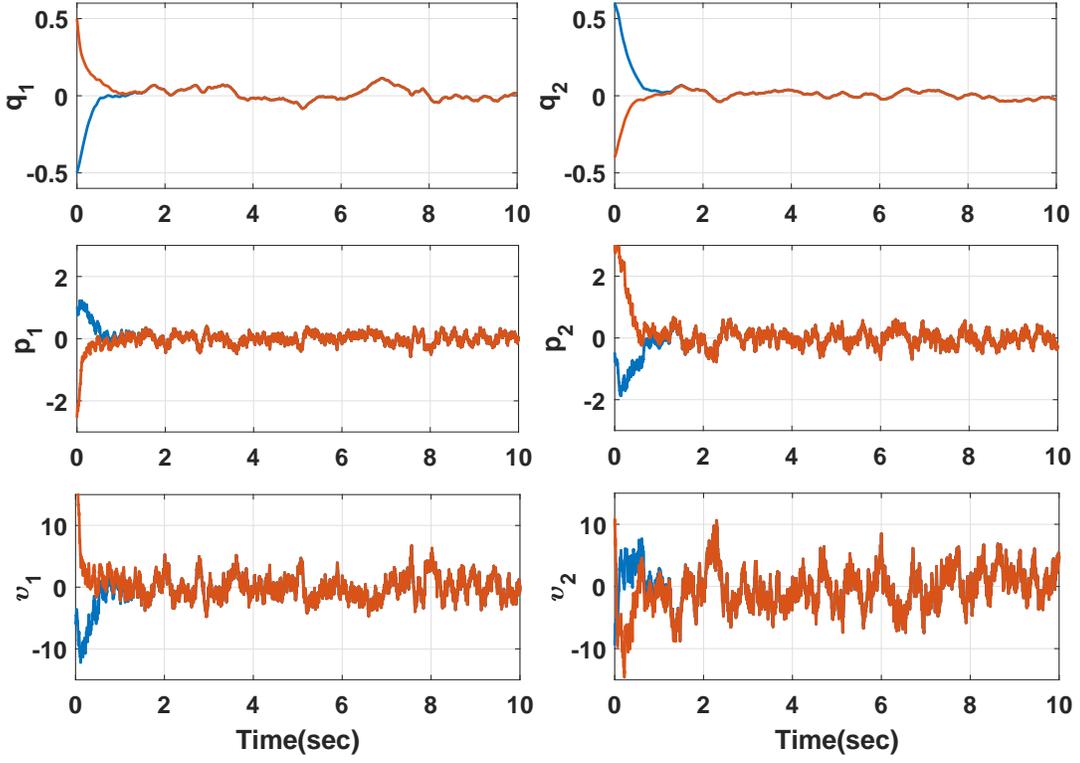}
	\caption{Two trajectories $ \q $ (top two plots), two trajectories $\p$ (middle two plots) started from two different initial conditions $ [q; p] =[0.5;  - 0.4;  - 2.5; 3]$ and $ [q'; p'] =[ - 0.5; 0.6; 1;  - 0.5]$, and the two corresponding input trajectories $\upsilon_1$ and $\upsilon_2$ (bottom two plots)}.
	\label{fig2}
\end{figure}
The Hamiltonian function $ H(q,p) $ is given by the total energy of the system as 
\begin{equation}
	\begin{split}
		H(q,p)=\frac{p_1^2}{2m}&+\frac{p_2^2}{2m(l+q_1)^2}+\frac{k_sq_1^2}{2}+mg(l+q_1)(1-\cos q_2),
		\nonumber
	\end{split}
\end{equation}    
where $ k_s $ is an elasticity coefficient of spring and $ g $ is the acceleration due to gravity. Now $ \eta(q,p)=-\frac{\partial H}{\partial q}(q,p)+ b(q,p) $ can be calculated as
\begin{equation}
	\eta(q,p)=\begin{bmatrix}
		\frac{p_2^2}{m(l+q_1)^3}-k_sq_1-mg(1-\cos q_2)\\ 
		-mg(l+q_1) \sin q_2
	\end{bmatrix}+\begin{bmatrix}
	-\frac{b_1 p_1}{m}\\ 
	-\frac{b_2 p_2}{m}
\end{bmatrix},
\label{xi}
\end{equation}
where $ b_1 $ is a damping coefficient of piston and $ b_2 $ is an air damping coefficient. By considering a 2-dimensional Brownian motion, the diffusion function $ \sigma(q) $ can be determined with the help of notion of relative kinematics by considering point $ O $ in Figure \ref{fig1} stochastically vibrating \cite{wu2012backstepping} which is given by
\begin{equation}\notag
	\sigma(q)=\begin{bmatrix}
		-m \sin q_2 & m\cos q_2\\ 
		-m(l+q_1)\cos q_2&-m(l+q_1)\sin q_2 
	\end{bmatrix}. 
\end{equation}
To introduce abrupt jumps in the system, we consider a one dimensional Poison process with the rate $ \lambda=1 $ and linear reset function $ \rho(\q)=\q $. The term $ \frac{\diff M(\q)}{\diff t} $ can be obtained as
\begin{equation}
	\begin{split}
		\frac{\diff M(\q)}{\diff t}=\frac{\partial M(\q)}{\partial q^T}\times\Big(\frac{\diff \q}{\diff t} \otimes I_2\Big)=\begin{bmatrix}
			0 &0 \\ 
			0 &\frac{2(1+\q_1)\p_1}{m} 
		\end{bmatrix}.
	\end{split}
	\label{dmdt}
\end{equation}
As control input $ \upsilon=[\upsilon_1~\upsilon_2]^T $ itself acting on the mass, one gets $ G(q)=I_2 $.
\begin{figure}[t]
	\centering
	\includegraphics[scale=0.6]{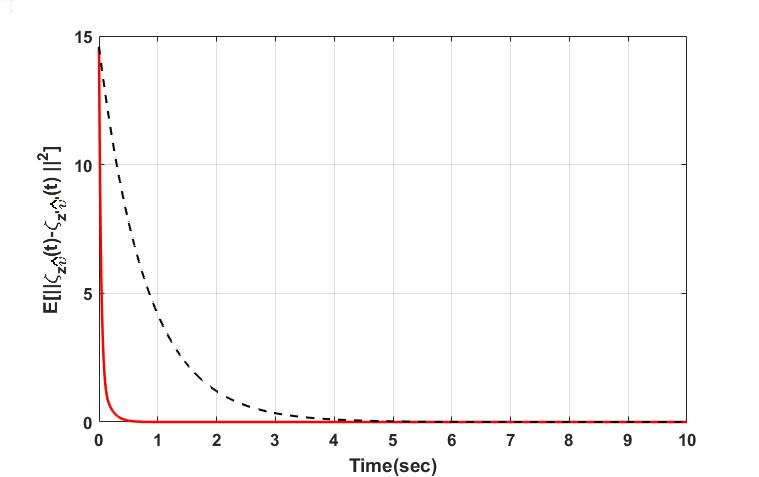}
	\caption{The average value of the squared distance of two trajectories of $ \hat{\Sigma} $ started from two different initial conditions $ z =[0.5; -0.4; -0.9; -2.12]$ and $ z' =[-0.5; 0.6; -0.6; 1.42]$. The black doted curve indicates corresponding bound given by (\ref{asdfg}).}
	\label{fig3}
	\vspace{-5mm}
\end{figure}
%\begin{figure}[tp]
%	\centering
%	
%	\includegraphics[scale=0.4]{fig/q_trajectory.PNG}
%	
%	\caption{Tracking responses for coordinate $ q_1 $ for several realizations with the same initial condition $ q_0=[0.8, 0.8]^T$}
%	\includegraphics[scale=0.4]{fig/p_trajectory.PNG}
%	\caption{Tracking responses for coordinate $ q_2 $ for several realizations with the same initial condition $ q_0=[0.8, 0.8]^T$}
%	\label{fig3}
%\end{figure}
Now with the help of (\ref{mass}), (\ref{xi}), (\ref{dmdt}), Theorem \ref{theorem2}, and fixing $ k=2 $, we can obtain the final state feedback control input $ \upsilon $ for the considered system as follows
\begin{align}
	\upsilon_1(\q,\p)=&-\frac{\p_2^2}{m(l+\q_1)^3}+k_s\q_1+mg(1-\cos \q_2)+\frac{b_1\p_1}{m}-\kappa_1\Big(\kappa_1+\frac{\lambda}{2}+\frac{L_2}{2\varepsilon_1^{2}}\Big)m\q_1\nonumber\\
	&-\Big(2\kappa_1+\frac{\lambda}{2}+\frac{L_2}{2\varepsilon_1^{2}}\Big)\p_1+\hat{\upsilon}_1, \nonumber\\
	\upsilon_2(\q,\p)=&mg(l+\q_1) \sin \q_2+\frac{b_2\p_2}{m}+\frac{2\kappa_1(1+\q_1)\p_1\q_2}{m}-\kappa_1\Big(\kappa_1+\frac{\lambda}{2}+\frac{L_2}{2\varepsilon_1^{2}}\Big)m\q_2(l+\q_1)^2\nonumber\\
	&-\Big(2\kappa_1+\frac{\lambda}{2}+\frac{L_2}{2\varepsilon_1^{2}}\Big)\p_2+\hat{\upsilon}_2\nonumber,
\end{align}
rendering the closed-loop system $ \delta_\exists$-ISS-M$_2$ with respect to input $ [\hat{\upsilon}_1~\hat{\upsilon}_2]^T $ for any arbitrarily chosen $\varepsilon_1>0$ and appropriately chosen $\kappa_1$.
For the simulation purpose, we consider system parameters as $ m=0.8 $, $ l=1.5 $, $ g=9.8 $, $ k_s=15 $, $ b_1=1 $, and $ b_2=1 $; all the constants and the variables are considered in SI units; the Lipschitz constants are computed as $ L_1=1, L_2=2, L_\sigma=1$, and $L_\rho=1$, and the design parameters are chosen as $\varepsilon_1=0.5 $ and $\kappa_1=4$. We choose inputs $ \hat{\upsilon}_1(t) = \hat{\upsilon}_2(t) =0.5\sin t$. Figure \ref{fig2} shows the evolution of the closed-loop trajectories $ \q $ and $ \p $ in presence of Brownian noise and Poisson jumps started from two different initial conditions $ [q; p] $=[0.5; $ - $0.4; $ - $2.5; 3] and $ [q'; p'] $=[$ - $0.5; 0.6; 1; $ - $0.5] and the evolution of the corresponding input trajectories $\upsilon_1$ and $\upsilon_2$. Figure \ref{fig2} shows that indeed, by virtue of the $ \delta_\exists$-ISS-M$_2$ property, both trajectories converge to each other. To verify the bound on $\EE[\|\zeta_{z\hat{\upsilon}}(t)-\zeta_{z'\hat{\upsilon}'}(t)\|^2]$ as given in (\ref{pqrs}), we simulated the closed-loop system for 5000 realizations, two fixed initial conditions, and the same input for both trajectories (i.e $ \hat{\upsilon}=\hat{\upsilon}' $). The inequality (\ref{pqrs}) reduces to
\begin{equation}
	\EE[\|\zeta_{z\hat{\upsilon}}(t)-\zeta_{z'\hat{\upsilon}'}(t)\|^2]\leq\beta(\|z-z'\|^2,t),
	\label{asdfg}
\end{equation}
where the $ \mathcal{KL} $ function $ \beta $ is given in (\ref{pqrs1}) and computed as $\beta(y,t)=\mathsf{e}^{-\kappa t}y$ with $ \kappa =1.25$. The average value of the squared distance of two trajectories of $ \hat{\Sigma} $ started from two different initial conditions $ z =[0.5;\ -0.4;\ -0.9;\ -2.12]$ and $ z' =[-0.5;\ 0.6;\ -0.6;\ 1.42]$ together with computed theoretical bound are shown in Figure \ref{fig3}. One can readily verify that the simulated distance is always lower than the computed theoretical one in (\ref{asdfg}). 

\section{Conclusion}\label{V}
We introduced a coordinate invariant notion of incremental input-to-state stability for stochastic control systems with jumps and provided its description in terms of existence of a notion of so-called incremental Lyapunov functions. Furthermore, a backstepping controller design scheme was proposed for a class of nonlinear stochastic Hamiltonian systems with jumps. The design scheme provides controllers rendering the close-loop systems $\delta_\exists$-ISS-M$_k$. Finally, we illustrated the effectiveness of the results on a nonlinear stochastic Hamiltonian system.

\bibliographystyle{alpha}
\bibliography{IEEEtran1}
\end{document}